\newcommand{\ord}{\rm{OrderMod}}
\title{Quantum
error-correcting  codes: the  unit design strategy}
\author{Ted Hurley\footnote{National University of Ireland Galway. Ted.Hurley@NuiGalway.ie}, Donny Hurley\footnote{Institute of Technology, Sligo. hurleyd@yahoo.com }, Barry Hurley\footnote{barryj\_2001@yahoo.co.uk}}
\date{}
\begin{document}
\maketitle
\begin{abstract}\let\thefootnote\relax\footnote{

Keywords: Quantum Code, MDS. 

MSC Classification: 94B05, 94B15, 94B60} 
Series of maximum distance quantum error-correcting codes are developed and analysed. 
  For a given rate and given error-correction capability,  quantum error-correcting codes with these specifications are constructed. The codes are explicit with efficient decoding algorithms. For a given field maximum length quantum codes are constructed. 
\end{abstract}
\section{Introduction}

Quantum error-correcting codes have an important role in quantum computing and are used  to protect quantum information from errors due to quantum noise and 
decoherence. 

A short introduction to quantum coding,  with  history and
bibliographical notes, is given  in Kim and Matthews
\cite{kim}. Background information on quantum codes and quantum
information theory may be found in  the book by Nielsen and Chung
\cite{nielsen}.   A survey article with emphasis on topological aspects
of quantum computing appears in Rowell and Wang \cite{notices}.    

The literature on quantum error-correcting codes is massive and expanding. 

Required background on (classical) coding theory and on basic algebra, including in particular {\em Field Theory}, 
 may be found in \cite{blahut} or in \cite{mac}.  $GF(q)$ will denote the finite field of order $q$; of necessity $q$ is a power of a prime. 
The non-zero elements of $GF(q)$ form a cyclic group of order $(q-1)$ and any generator of the group is termed a {\em primitive} element of $GF(q)$. 
A (classical) code of length $n$, dimension $k$ and distance $d$ over $GF(q)$ is denoted by $[n,k,d]_q$ or simply by $[n,k,d]$ when the field is understood or given.

  Seminal work of Calderbank, Shor and  Steane \cite{calderbank,good,steane1} provide  the relationship between  classical codes and quantum error-correcting codes. Their construction 
is now known as the {\em CSS construction}. Following Rains' works
\cite{rains} on nonbinary quantum codes, the work of Calderbank, Shor and
Steane was extended to nonbinary cases by Ashikhmin and Knill,
\cite{ash,ketner}. Gottesman \cite{gott} had previously developed the stabilizer formalism for quantum codes.  See also work of Shor and Steane in \cite{maybe2,shor1,steane}. 

An $[[n,k,d]]_q$ code is a quantum code  of length $n$, dimension $k$
and minimum distance $d$ over the field $GF(q)$; the word {\em $q-$ary
code} is sometimes used.  When the field is understood or given, the
notation  $[[n,k,d]]$ can be used, without the $q$ suffix. Use
$[[n,k,\geq d]]$ to mean a quantum code of length $n$, dimension $k$ and minimum distance at least $d$.  

A classical $[n,k,d]$ code satisfies the Singleton bound $d\leq (n-k+1)$ and a code reaching this bound is called an {\em mds (maximum distance separable) code}. A quantum $[[n,r,d]]$ code satisfies the quantum Singleton bound $2d \leq (n-r+2)$ and a quantum code attaining the bound  is called an {\em mds quantum code}.
\subsection{The quantum codes}
Here series of mds quantum codes are constructed. For a given rate and
given error-correcting capability quantum error-correcting mds codes of
this rate and capability are given. Efficient decoding algorithms are available. The
codes are constructed from dual-containing linear mds codes derived
using the constructions in \cite{hurley}. For each characteristic,
quantum mds codes are constructed over finite fields with this
characteristic and with given rate and given error-correcting capability
provided the 
characteristic does not divide the then required length. For a given
rate and required  error-correcting capability,  quantum mds codes are constructed over a
field of prime order with this rate and achieving the  required
error-correcting capability. For a given finite field, best quantum mds codes  
 are constructed over this field.


\subsection{CSS construction}
The CSS constructions in use here are specified  as follows: 

\begin{itemize} 
\item Let $\mathcal{C}$ be a classical linear code $[n,k,d]$ over $GF(q) $ containing its dual $\mathcal{C}^\perp$. The CSS construction derives a quantum (stabilizer)  $[[n,2k-n,\geq d]]$ code over $GF(q)$. 
\item Let $\mathcal{D}$ be a classical linear code over $GF(q^2)$ containing its Hermitian dual  $\mathcal{D}^{\perp_H}$. The CSS construction derives a quantum (stabilizer) code $[[n,2k-n, \geq d]]$ code over $GF(q^2)$.
\end{itemize}

For more details on CSS constructions of quantum error-correcting codes
see \cite{ash,ketner}; proofs of the above  may also be found therein. The work of \cite{ash} follows from Rains' work on nonbinary codes \cite{rains}.  

A code containing its dual is called a {\em dual-containing} code.

As noted in for example  \cite{ash} if the dual-containing code used for the CSS construction is an mds linear code then the quantum code obtained is a quantum mds code.   

\subsection{The dual-containing codes}
In \cite{hurley} systems of  mds linear codes, with efficient decoding algorithms,  are constructed and analysed using 
Vandermonde/Fourier matrices. These codes are defined using the unit-derived methods of \cite{hur1}. 
The constructions in \cite{hurley} are  now used to define series of 
 dual-containing
codes, and Hermitian dual-containing codes,  from which mds quantum codes are constructed by the CSS
constructions. The dual-containing codes obtained have efficient
decoding algorithms by \cite{hurley} giving efficient decoding
algorithms for the quantum codes constructed. 
 
In \cite{hurley100} binary and characteristic $2$  quantum codes are
constructed from  group 
rings, with  matrix implementations, but the methods there have been
ignored.   The
paper \cite{lin} constructs quantum codes from matrix product codes.

\section{The constructions} 

Let $\om$ be a primitive $n^{th}$ root of unity in a field $\F$; primitive here means that $\om^n=1$ and $\om^r \neq 1$ for $1\leq r < n$. For
such an $\om$ to exist in $\F$ it is necessary that the characteristic of $\F$ 
does not divide $n$ and in this case $n$ has an inverse in $\F$.  

The Fourier $n\ti n$ matrix, relative to $\om$,  is the $n\ti n$ matrix  

$$F_n= \begin{pmatrix} 1 &1 &1 & \ldots & 1 \\ 1 & \om & \om^2 & \ldots
       & \om^{n-1} \\ 1&\om^2 & \om^{2(2)} & \ldots & \om^{2({n-1})} \\
       \vdots & \vdots & \vdots & \ldots & \vdots \\ 1 & \om^{n-1} &
       \om^{(n-1)2} & \ldots & \om^{(n-1)(n-1)} \end{pmatrix} $$. 

The inverse of $F_n$ is obtained directly by replacing $\om$ by $\om^{n-1}$ in the above formula and dividing by $n$;  $n^{-1}$ exists in $\F$. The inverse of $F_n$ is not required explicitly here but is there in the background.   
The Fourier matrix is a type of Vandermonde matrix where the entries are roots of unity. 

The rows, in order,  of a Fourier matrix $F_n$ under consideration 
will be denoted by $\{e_0, e_1, \ldots, e_{n-1}\}$. Thus $e_i = (1,
\om^i, \om^{i(2)}, \ldots, \om^{i(n-1)})$ for the primitive $n^{th}$ root of unity $\om$ used to build $F_n$. 

We refer to \cite{hurley} for the following theorem:

\begin{theorem}{\cite{hurley}}\label{seq} Let $\mathcal{C}$ be a code generated by taking any $r$ rows of $F_n$ in arithmetic sequence with arithmetic difference $k$ satisfying $\gcd(n,k) = 1$. Then $\mathcal{C}$ is an mds (maximum distance separable) $[n,r,n-r+1]$ code.
\end{theorem}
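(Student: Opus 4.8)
The plan is to show that the code $\mathcal{C}$ generated by $r$ rows $e_{a}, e_{a+k}, e_{a+2k}, \ldots, e_{a+(r-1)k}$ of $F_n$ (indices taken mod $n$, with $\gcd(n,k)=1$) is an $[n,r,n-r+1]$ code. Since there are $r$ generators, it suffices to prove two things: that these $r$ rows are linearly independent (so the dimension is exactly $r$), and that the minimum distance is at least $n-r+1$; the Singleton bound $d \le n-r+1$ then forces equality and gives the mds property. Linear independence is immediate because any $r$ rows of $F_n$ form an $r \times n$ submatrix, any $r$ columns of which constitute a Vandermonde matrix in distinct nodes $\om^{i}$ (distinct since $\om$ is a primitive $n$th root of unity), hence nonsingular.

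For the distance bound I would argue that every generator matrix of $\mathcal{C}$ has the property that every $r \times r$ minor is nonzero; this is the standard characterization of mds codes. Take the $r \times n$ matrix $G$ whose rows are the chosen $e_{a+jk}$, $0 \le j \le r-1$. Pick any $r$ columns, indexed by $0 \le c_1 < c_2 < \cdots < c_r \le n-1$. The resulting $r \times r$ submatrix has $(j,\ell)$ entry $\om^{(a+jk)c_\ell} = \om^{a c_\ell}\,(\om^{k c_\ell})^{j}$. Factoring out $\om^{a c_\ell}$ from each column $\ell$ (a nonzero scalar), the remaining matrix has entry $(\om^{k c_\ell})^{j}$, which is a Vandermonde matrix in the nodes $x_\ell := \om^{k c_\ell}$, $\ell = 1,\ldots,r$. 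Its determinant is $\prod_{\ell < m}(x_m - x_\ell)$, which is nonzero precisely when the $x_\ell$ are pairwise distinct. So the key step is to verify that $\om^{k c_1}, \ldots, \om^{k c_r}$ are pairwise distinct.

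This last point is where the hypothesis $\gcd(n,k) = 1$ does its work, and it is the only real content of the argument. Since $\om$ is a primitive $n$th root of unity, $\om^{k c_\ell} = \om^{k c_m}$ iff $n \mid k(c_\ell - c_m)$; because $\gcd(n,k)=1$ this is equivalent to $n \mid (c_\ell - c_m)$, which is impossible for distinct $c_\ell, c_m$ in the range $0,\ldots,n-1$. Hence all the $x_\ell$ are distinct, every $r \times r$ minor of $G$ is nonzero, and $\mathcal{C}$ is mds with parameters $[n,r,n-r+1]$. I anticipate no serious obstacle: the arithmetic-sequence structure of the rows is exactly what lets one pull out a scalar from each column and reduce to a plain Vandermonde determinant, and the coprimality condition is precisely what guarantees the Vandermonde nodes are distinct. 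The only thing to be careful about is bookkeeping with the index $a$ of the first row and the fact that row indices are read modulo $n$, but since a constant shift of all row exponents only multiplies the submatrix by the nonsingular diagonal $\mathrm{diag}(\om^{a c_1}, \ldots, \om^{a c_r})$ (or can be absorbed by noting that $e_{a}, e_{a+k}, \ldots$ is just a reindexed arithmetic progression), it does not affect nonsingularity. One may alternatively invoke the BCH-type bound or cite the relevant result of \cite{hurley} directly, but the self-contained Vandermonde computation above is cleanest.
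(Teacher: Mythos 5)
Your argument is correct and self-contained. Note that the paper itself offers no proof of this theorem: it is quoted from the reference \cite{hurley}, whose methodology is precisely the Vandermonde/Fourier-matrix analysis you reconstruct, so your route is the natural one rather than a genuinely different one. Your key chain is sound: the MDS property is equivalent to every $r\times r$ minor of a generator matrix being nonzero; for the rows $e_a, e_{a+k},\ldots,e_{a+(r-1)k}$ and columns $c_1<\cdots<c_r$ the entry is $\omega^{ac_\ell}(\omega^{kc_\ell})^j$, so after extracting the nonzero factor $\omega^{ac_\ell}$ from each column the minor is a Vandermonde determinant in the nodes $\omega^{kc_1},\ldots,\omega^{kc_r}$, and $\gcd(n,k)=1$ together with the primitivity of $\omega$ is exactly what makes these nodes distinct; this simultaneously gives rank $r$, so dimension and distance both come out right and the Singleton bound closes the argument. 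One sentence of yours is overstated: for \emph{arbitrary} $r$ rows of $F_n$ it is not true that every choice of $r$ columns yields a Vandermonde matrix (that structure needs either consecutive column exponents or, as in your main computation, the arithmetic-progression rows plus the column rescaling). This is harmless here, since linear independence of any set of rows already follows from the invertibility of $F_n$, and in any case your minor computation subsumes the rank claim; but you should either drop that remark or justify independence via $F_n^{-1}$ rather than via a blanket Vandermonde assertion.
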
 
In particular we have:
\begin{theorem}{\cite{hurley}}\label{seq1} Let $\mathcal{C}$ be a code generated by taking $r$ consecutive rows of $F_n$. Then $\mathcal{C}$ is an mds $[n,r,n-r+1]$ code.
\end{theorem}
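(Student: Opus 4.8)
The plan is to observe that Theorem~\ref{seq1} is simply the special case $k=1$ of Theorem~\ref{seq}. A choice of $r$ consecutive rows of $F_n$, say $\{e_s, e_{s+1}, \ldots, e_{s+r-1}\}$, is precisely an arithmetic sequence of rows with common arithmetic difference $k=1$, and the condition $\gcd(n,1)=1$ holds trivially for every $n$. Hence the conclusion that $\mathcal{C}$ is an mds $[n,r,n-r+1]$ code follows at once from Theorem~\ref{seq}, and in this sense there is nothing further to prove beyond invoking the more general statement.

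For completeness, and because the direct argument is short and exhibits exactly why the Fourier/Vandermonde structure forces the mds property, I would also record the self-contained proof. First, the rows of $F_n$ are the rows of an invertible matrix, so any $r$ of them are linearly independent; thus $\dim\mathcal{C}=r$, while the length is $n$ by construction. Next I would write a general codeword as $c=\sum_{i=0}^{r-1}a_i e_{s+i}$ and compute its $j$-th coordinate as $\sum_{i=0}^{r-1}a_i\om^{(s+i)j}=\om^{sj}\,p(\om^j)$, where $p(x)=\sum_{i=0}^{r-1}a_i x^i$ has degree at most $r-1$. Since $\om^{sj}\neq 0$, the coordinate in position $j$ vanishes if and only if $\om^j$ is a root of $p$.

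The key point is then that the powers $\om^0,\om^1,\ldots,\om^{n-1}$ are $n$ distinct elements of $\F$, as $\om$ is a primitive $n$-th root of unity, and a nonzero polynomial of degree at most $r-1$ over a field has at most $r-1$ roots. Therefore a nonzero codeword has at most $r-1$ zero coordinates, hence weight at least $n-r+1$, so $d\geq n-r+1$. Combined with the Singleton bound $d\leq n-r+1$ this gives $d=n-r+1$, and $\mathcal{C}$ is mds.

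There is no real obstacle here. The only things to be careful about are the trivial bookkeeping that consecutive rows do form an arithmetic progression with the gcd condition automatically met, and — in the self-contained version — the standard fact that a polynomial of degree at most $r-1$ cannot have $r$ distinct roots, which is exactly the Vandermonde ingredient underlying Theorem~\ref{seq}.
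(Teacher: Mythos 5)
Your proposal is correct and follows the same route as the paper: the paper states Theorem~\ref{seq1} as the special case $k=1$ of Theorem~\ref{seq} (``In particular we have\ldots''), with the proof itself deferred to \cite{hurley}, and your observation that $\gcd(n,1)=1$ makes the specialization immediate is exactly what is intended. Your additional self-contained argument (coordinates of a codeword as $\om^{sj}p(\om^j)$ with $\deg p\leq r-1$, distinctness of the $\om^j$, and the Singleton bound) is a valid reconstruction of the underlying Vandermonde proof that the paper leaves to the reference.
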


 Efficient decoding algorithms are in addition given in that paper
 \cite{hurley}. The {\em unit-derived strategy} as in \cite{hur2,hur1}
 is used for the constructions and analysis. The Theorems are based  on
 methods originally derived in \cite{barr}. 

 Define the Euclidean inner product of vectors $u = (u_1,u_2, \ldots,
 u_n), v=(v_1,v_2, \ldots, v_n) \in \F^n$ by $<u,v>_E = u_1v_1 + u_2v_2+
 \ldots u_nv_n$;  this is an element of $\F$. An Hermitian inner product
 will be defined over the field $GF(q^2)$ in Section \ref{her}. For
 the Euclidean inner product, omit the suffix $E$ so that $<u,v> =
 <u,v>_E$.  

Let $C$ be a subspace of $\F^n$. The (Euclidean) dual of $C$ is defined by 
$C^\perp = \{u\in \F^n | <u,v> = 0, \forall v\in C\}$. 
 If $C$ has dimension $r$ then $C^\perp$ has dimension $n-r$. 

Let $\{e_0,e_1, \ldots, e_{n-1}\}$ denote the rows, in order,  of a Fourier matrix
$F_n$ over a field $\F$. Note that $\{e_0,e_1, \ldots, e_{n-1}\}$ is a basis for $\F^n$. This basis is not an orthogonal basis (luckily!) relative to
the Euclidean inner product. 

The inner product of the rows $e_i$ of the Fourier $n\ti n$ matrix satisfy   
$<e_i,e_j>_E= e_ie_j\T = 1 + \om^{i+j} + \om^{2(i+j)}+ \ldots +
\om^{(n-1)(i+j)}$. 

Notice that $<e_i, e_{n-i}> = n$ (where $e_n$ is interpreted as
$e_0$) and that $<e_i,e_j> = 0$ when $j\neq n-i$. This enables the 
calculation of the dual of a code generated by the rows of a Fourier
matrix. 

\begin{proposition}\label{dual} Let $\mathcal{C}$  be the code generated by the
 distinct rows $<e_{i_1},e_{i_2}, \ldots ,
 e_{i_r}>$ of the Fourier $n\ti n$ matrix $F_n$. Then $\mathcal{C}^\perp$ is generated by
 the rows of $F_n$ excluding $\{e_{n-i_1},e_{n-i_2}, \ldots, e_{n-i_r}\}$.
\end{proposition}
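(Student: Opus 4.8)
The plan is to realise the code $\mathcal{D}$ generated by the non-excluded rows as $\mathcal{C}^\perp$ by combining the orthogonality relations among the $e_i$ (recorded just before the proposition) with a dimension count. Write $S=\{i_1,i_2,\ldots,i_r\}$ for the index set of the generating rows of $\mathcal{C}$, and put $T=\{0,1,\ldots,n-1\}\setminus\{n-i_1,n-i_2,\ldots,n-i_r\}$, with all indices read modulo $n$ (so $e_n$ is interpreted as $e_0$). Then $\mathcal{D}$, the code generated by $\{e_j:j\in T\}$, is exactly the code named in the statement. Since $i\mapsto n-i$ is a bijection of the index set $\{0,1,\ldots,n-1\}$ onto itself, the excluded set $\{n-i_1,\ldots,n-i_r\}$ has $r$ distinct members and hence $|T|=n-r$.

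First I would verify $\mathcal{D}\subseteq\mathcal{C}^\perp$. As $\mathcal{C}$ is spanned by $e_{i_1},\ldots,e_{i_r}$, it is enough to show $<e_j,e_{i_\ell}>=0$ for every $j\in T$ and every $\ell\in\{1,\ldots,r\}$. By the computation of $<e_i,e_j>_E$ stated above, $<e_i,e_j>=0$ whenever $i+j\not\equiv 0\pmod n$; and the condition $j\in T$ says precisely that $j\neq n-i_\ell$ for all $\ell$, i.e. $i_\ell+j\not\equiv 0\pmod n$. Hence each $e_j$ with $j\in T$ is orthogonal to every generator of $\mathcal{C}$, so $e_j\in\mathcal{C}^\perp$, and therefore $\mathcal{D}\subseteq\mathcal{C}^\perp$.

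Then I would match dimensions. The rows $e_0,\ldots,e_{n-1}$ form a basis of $\F^n$ (the Fourier matrix, being Vandermonde in the distinct powers of $\om$, is invertible), so any subset of them is linearly independent; in particular $\dim\mathcal{C}=r$ and $\dim\mathcal{D}=|T|=n-r$. Since $\dim\mathcal{C}^\perp=n-\dim\mathcal{C}=n-r=\dim\mathcal{D}$ and $\mathcal{D}\subseteq\mathcal{C}^\perp$, the equality $\mathcal{D}=\mathcal{C}^\perp$ follows, which is the assertion.

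The argument is essentially bookkeeping, and the one place that needs care is the modular arithmetic on the indices: the partner index $n-i$ is to be taken in the integers modulo $n$ (so that $n-0=0$), and one should observe that $i\mapsto n-i$ is an involution, which is what guarantees both that the excluded rows are again $r$ distinct rows and that $T$ has size exactly $n-r$. Beyond that there is no real obstacle — the dichotomy $<e_i,e_j>\in\{0,n\}$, valid because the characteristic of $\F$ does not divide $n$, does all the work.
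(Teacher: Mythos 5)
Your proof is correct and follows the same route as the paper: the paper's one-line argument ("this follows directly since $<e_i,e_j>=0$ when $j\neq n-i$") is exactly your orthogonality step, and your dimension count ($\dim\mathcal{C}^\perp=n-r$ matching the $n-r$ independent non-excluded rows) is the detail the paper records immediately after the proposition when it notes the given generators form a basis of $\mathcal{C}^\perp$. You have simply written out explicitly the bookkeeping the paper leaves implicit.
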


\begin{proof} This follows directly since $<e_i,e_j> = 0$ when $j\neq
 n-i$.
\end{proof}

The generators for $\mathcal{C}^\perp$ derived in Proposition \ref{dual} are  a basis and  $\mathcal{C}^\perp$ has dimension $(n-r)$.

For example  suppose $\{e_0,e_1,\ldots , e_{9}\}$ are the rows of a Fourier
$10\ti 10$ matrix. \\ Let $\mathcal{C}= <e_0,e_1,e_2,e_3,e_4,e_5>$. Then
$\mathcal{C}^\perp$ is generated by the rows excluding $\{e_0, e_{9},
e_{8},e_7, e_6,e_5\}$ and thus $\mathcal{C}^\perp =
<e_1,e_2,e_3,e_4>$. Note in this example, by \cite{hurley}, that  $\mathcal{C}$
is a $[10,6,5]$ code and now we see that it contains its dual $\mathcal{C}^\perp
=<e_1,e_2,e_3,e_4>$.  By CSS construction a $[[10,2,5]]$ quantum mds code
is obtained. By taking $\mathcal{D}=<e_0,e_1, \ldots, e_7>$ get that 
$\mathcal{D}^\perp = <e_1,e_2>$ and so again $\mathcal{D}$ is a dual
containing $[10, 8,3]$ code which by CSS construction gives a
$[[10,6,3]]$ mds quantum code. If however $\mathcal{T}=
<e_0,e_1,e_2,e_3,e_4>$ then $\mathcal{T}^\perp = <e_1,e_2,e_3,e_4,e_5>$
and so 
$\mathcal{T}$ is not dual-containing;  it will be noticed later that for the rows of a Fourier matrix to generate a dual-containing code it is necessary that more than half the  number of rows need to be involved.

Codes from \cite{hurley} which can be shown to be dual-containing relative to the Euclidean
inner product  are used to construct mds quantum codes by the CSS construction. 

 In section \ref{her} codes from \cite{hurley} which are dual-containing relative to a 
Hermitian inner product, over fields of form $GF(q^2)$, are used  to form 
mds quantum codes. 

Here are some examples with which to begin: 
\begin{enumerate}

\item $GF(2^5)$. Here the Fourier $31\ti 31$ matrix $F_{31}$ exists over $GF(2^5)$. Let
$r> 15$ and form the code generated by the first $r$  rows of
$F_{15}$ to form the $[31, r, 31-r+1]$ code which is mds by Theorem \ref{seq1} and is 
dual-containing by Theorem \ref{greater}. This gives by the CSS construction the
$[[31, 2r-31, 31-r+1]]$ mds quantum code. For instance if $r=25$ this
gives the $[[31,19,7]]$ mds quantum code over $GF(2^5)$. 
   
\item $GF(2^8)$. The length here is that of the Reed-Solomon codes. Since $2^8-1 =
255$ there exists a Fourier $255\ti 255$ matrix over
$GF(2^8)$. Codes may be obtained by taking $r$ rows in
succession or indeed $r$ rows in arithmetic sequence with difference
$k$ satisfying $\gcd(255,k)=1$, to form an $[255,r,255-r+1]$ code which will be
dual-containing by Theorem \ref{greater} provided $r\geq 128 $. From
this, using the CSS construction,  $[[255,2r-255, 256-r]]$ quantum mds codes
are formed. For example with $r= 245$ get $[[255, 235,11 ]]$ mds quantum
code. 
\item Consider the field $GF(257)$ and note that  $257$ is prime. The field  has primitive $256^{th}$ roots of unity, and then form the Fourier $256\ti 256 $ Fourier matrix with a  one of these primitive roots. For example $3 \mod 257$ has order $256$ in $GF(257)$.
Now take the first $r$ rows of this Fourier matrix with $r> 128$ to form a $[256, r, 256-r+1]$ dual-containing code from which by the CSS construction the $[[256, 2r-256, 257-r]]$ quantum code may be formed. These examples may be compared with those in the previous example; the arithmetic here is modular arithmetic,  which is easy to implement, and one works over $\Z_{257}$. 
\end{enumerate}

\section{Required  rate and error-correcting capability}\label{required}
For a given rate $R$ and error-correcting capability it is required  
to construct a quantum code with this rate and distance greater than or equal to a given distance, that
is, it is required to  construct a quantum code of the from $[[n,k,\geq d]]$ for $R=\frac{k}{n}$ and for a given $d$. 

To construct such a code,  construct first of all a dual-containing code
which when the CSS  construction is applied will give the required
quantum code. In fact an mds quantum code with given
rate and given error-correcting  capability will be constructed. Note that the 
CSS constructions produce codes of the form $[[n,2r-n,d]]$. 

A $[[n,k,d]]$ quantum code must satisfy the quantum Singleton Bound $2d\leq n-k+2$ for $k>1$ and the codes constructed reach this bound. 

A dual-containing code which is not self-dual must have rate greater than $\frac{1}{2}$. If it has rate
equal to $\frac{1}{2}$ it would be a self-dual code  and the corresponding CSS
construction would give a rate of $0$.  Assume here when constructing $[[n,k,d]]$ that $k> 0$.  

\begin{theorem}\label{greater} Let  $\{e_0,e_1, \ldots, e_{n-1}\}$ be the
 rows of a Fourier $n\ti n$ matrix $F_n$. Suppose  $2t\geq
 n-1$. Then the  code $\mathcal{C}$ generated by $\{e_0,e_1, \ldots, e_t\}$ (which has $t+1$ elements) is a dual-containing $[n,t+1, n-t]$ mds code. 
\end{theorem}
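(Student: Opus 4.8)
The plan is to combine Theorem~\ref{seq1} with Proposition~\ref{dual}. First I would invoke Theorem~\ref{seq1}: since $\mathcal{C}$ is generated by the $t+1$ consecutive rows $\{e_0, e_1, \ldots, e_t\}$ of $F_n$, it is an mds $[n, t+1, n-(t+1)+1] = [n, t+1, n-t]$ code. This settles the dimension and minimum-distance claims immediately, so the only real content is the dual-containing assertion $\mathcal{C}^\perp \subseteq \mathcal{C}$.

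To prove $\mathcal{C}^\perp \subseteq \mathcal{C}$, I would apply Proposition~\ref{dual} with index set $\{i_1, \ldots, i_r\} = \{0, 1, \ldots, t\}$. The proposition tells us that $\mathcal{C}^\perp$ is generated by the rows of $F_n$ \emph{excluding} $\{e_{n-0}, e_{n-1}, \ldots, e_{n-t}\}$, where $e_n$ is read as $e_0$. Thus $\mathcal{C}^\perp$ is generated by those $e_j$ with $j \notin \{0\} \cup \{n-t, n-t+1, \ldots, n-1\}$, i.e. by $\{e_j : t+1 \leq j \leq n-t-1\}$ (an empty set, hence $\mathcal{C}^\perp = \{0\}$, when $n - t - 1 < t+1$, i.e. $n \le 2t+1$). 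The key step is then the inclusion of index sets: I must check that $\{j : t+1 \le j \le n-t-1\} \subseteq \{0,1,\ldots,t\}$. Every such $j$ satisfies $j \le n-t-1$, and the hypothesis $2t \ge n-1$ gives $n - t - 1 \le t$, so indeed $j \le t$; combined with $j \ge t+1 \ge 0$ this places each generator $e_j$ of $\mathcal{C}^\perp$ among the generators $\{e_0,\ldots,e_t\}$ of $\mathcal{C}$. Hence $\mathcal{C}^\perp \subseteq \mathcal{C}$, as required.

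I do not anticipate a genuine obstacle here — the argument is essentially a bookkeeping check on which row-indices survive in the dual. The one point deserving care is the wrap-around convention in Proposition~\ref{dual} (that $e_{n-0}$ is $e_0$, not a nonexistent $e_n$), and the degenerate cases: when $2t = n-1$ the surviving range is empty and $\mathcal{C}$ is self-dual, while when $2t > n-1$ the dual is strictly smaller, consistent with the remark preceding the theorem that a proper dual-containing code needs more than half the rows. I would state the index computation cleanly, note that the listed generators of $\mathcal{C}^\perp$ form a sublist of the generators of $\mathcal{C}$, and conclude. Finally, since $\mathcal{C}$ is mds and dual-containing, the remark in the excerpt (from \cite{ash}) that the CSS construction applied to a dual-containing mds linear code yields a quantum mds code is what makes this theorem the workhorse for the examples.
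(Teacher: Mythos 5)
Your strategy is the paper's own: Theorem~\ref{seq1} for the mds parameters, then Proposition~\ref{dual} plus an index check for dual-containment. But the index bookkeeping --- which you correctly identify as the only real content --- goes wrong. Excluding $\{e_{n-0},e_{n-1},\ldots,e_{n-t}\}=\{e_0\}\cup\{e_{n-t},\ldots,e_{n-1}\}$ from the $n$ rows leaves the rows indexed by $\{1,2,\ldots,n-t-1\}$, \emph{not} by $\{t+1,\ldots,n-t-1\}$ as you assert. Hence $\mathcal{C}^\perp=\langle e_1,\ldots,e_{n-t-1}\rangle$, a code of dimension $n-t-1$, exactly as duality of an $[n,t+1]$ code demands. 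Your version would force $\mathcal{C}^\perp=\{0\}$ whenever $n\le 2t+1$, i.e.\ always under the hypothesis, which contradicts $\dim\mathcal{C}^\perp=n-t-1>0$ for $t+1<n$ (for instance, for $n=10$, $t=5$ the paper computes $\mathcal{C}^\perp=\langle e_1,e_2,e_3,e_4\rangle$). Likewise your side remark that $2t=n-1$ gives a self-dual code is false: there the dual is $\langle e_1,\ldots,e_t\rangle$, of dimension $t$ versus $t+1$, and the paper explicitly notes that self-dual codes cannot be obtained by this method.

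The error is confined to the complement computation, and the inequality you invoke is the right one, so the argument is easily repaired: with the correct generating set, every generator index $j$ of $\mathcal{C}^\perp$ satisfies $1\le j\le n-t-1\le t$ (the last step from $2t\ge n-1$), so $\mathcal{C}^\perp=\langle e_1,\ldots,e_{n-t-1}\rangle\subseteq\langle e_0,\ldots,e_t\rangle=\mathcal{C}$. That is precisely the paper's proof. As written, however, your containment is deduced from a false description of $\mathcal{C}^\perp$, so the proof needs this correction before it stands.
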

\begin{proof}
That $\mathcal{C}$ is an $[n,t+1, n-t]$ mds code follows from \cite{hurley}, Theorem \ref{seq1} above. 
The dual $\mathcal{C}^\perp$ of $\mathcal{C}$ is  generated by the rows of $F_n$ excluding $\{e_0, e_{n-1}, e_{n-2}, \ldots, e_{n-t}\}$ by Proposition \ref{dual}. Thus $\mathcal{C}^\perp = <e_1, e_2, \ldots, e_{n-t-1}>$. Now $n-t-1 < t$ as $2t \geq n-1$ and thus $\mathcal{C}^\perp \subset \mathcal{C}$.     
\end{proof}

Thus for $R=\frac{r}{n} > \frac{1}{2}$  an $[n,r,d]$ mds dual-containing code may be
built from a Fourier $n\ti n $ matrix by the methods of
\cite{hurley}. These codes have efficient decoding algorithms by methods
of \cite{hurley}.  

For example let $n=12$ and consider the Fourier $12 \ti 12 $ matrix with
rows $\{e_0,e_1,\ldots, e_{11}\}$. Then the code generated by $\{e_0,e_1,
e_2,e_3,e_4,e_5,e_6\}$ is a $[12,7,6]$ dual-containing code and also
$\{e_0, e_1, \ldots, e_8\}$ is a $[12,9,4]$ dual-containing code;
however the  code generated by $\{e_0, e_1,e_2,e_3,e_4,e_5\}$ is not
dual-containing. Another way to obtain a dual-containing code is by choosing $\{e_6,e_7,e_8,e_9,e_{1},e_{11},e_0\}$; these are in order of the Fourier matrix.  Note that $\{e_0, e_6\}$ must be included, since $<e_0,e_0> \neq 0$ and $<e_6,e_6> \neq 0$.  

{\em Strictly} greater  than half the elements must be taken in cases here; self-dual codes cannot be obtained by the method. 

For given $n$ does there exist a Fourier $n\ti n$ matrix over some
finite field $GF(q)$? Does there exist a Fourier $n\ti n$ matrix over a
finite field of given characteristic $p$? 
 
The following is taken from \cite{hurley}. For a Fourier $n\ti n$ 
matrix to exist over $GF(p^t)$ for a prime $p$ it is necessary that $p$ does not divide $n$. Let $p$ be a prime not dividing $n$. Then by Euler's Theorem, $p^{\phi(n)}\equiv 1 \mod n$. Thus $p^{\phi(n)} -1 = nq$ for some integer $q$. Let $\F= GF(p^{\phi(n)})$. Then $\F$ has a primitive element $\be$ of order $p^{\phi(n)}-1$. 
Now let $\om = \be^q$. Then $\om$ is an element of order $n$ in $\F$. Hence a Fourier $n\ti n$ matrix may be constructed over $\F$ which is a field of characteristic $p$. 

Although $GF(p^{\phi(n)})$ works,  in many cases smaller fields of
characteristic $p$ may be obtained over which a Fourier $n\ti n$ matrix
exists; it depends on the order of $p \mod n$ which exists when $p$ does not divide $n$.

Suppose now a rate $R$ and  a distance $d$ are given and it is
required to build a $[[n,k,\geq d]]$ quantum code with $R=\frac{k}{n}$. We can assume that $n+k$ is even.  Note when acquiring a $[[n,2r-n,\geq d]]$ code from a dual-containing code $[n,r,d]$ code by CSS construction that $n+(2r-n)$ (the length plus the dimension) is always even.
 
Let
$2r=n+k$. Then $\frac{r}{n} = \frac{1}{2}+\frac{k}{2n} $ and thus
$\frac{1}{2} < \frac{r}{n} <1 $. Thus 
build a $[n,r,d]$ dual-containing code achieving the distance  
$d=n-r+1$. The rate of this code must be $\frac{r}{n}$ and $d= n-r+1 =
n-\frac{n+k}{2}+1 = n- \frac{n+nR}{2}+1 = n(1-\frac{1+R}{2}) +1 $. Thus
it is required that $n=(d-1)/(1-\frac{1+R}{2})=
\frac{2(d-1)}{1-R}$. Suppose now $(1-R)= \frac{p}{q}$ as a reduced
fraction. Since $n$ is an integer, it is required further that $p/(2(d-1))$. Then
consider distances $d=d_0,d+1=d_1, \ldots, $ until $p/(2(d_i-1))$; 
this gives a value for $n$ and hence a value for $r$ and $k$. 
Now form the $n\ti
n$ Fourier matrix and choose $r$ rows in succession (or in suitable arithmetic sequence as per Theorem \ref{seq}). This gives a
$[n,r,n-r+1]$ dual-containing mds code with $n-r+1\geq d$. By the CSS construction a
$[[n,2r-n, n-r+1]]$ quantum code is constructed. Now $2r-n = n+k - n =k$ and thus a quantum mds code of required rate and   required error-correcting capability has been constructed. 

Note that if $R=\frac{t-1}{t}$ then $1-R = \frac{1}{t}$ and automatically
the formula $n=\frac{2(d-1)}{1-R}$ gives $n$ as a positive integer. 

 \subsection{Examples of required rate and error-correcting capability}

For positive integers $t,v$, use \ord$(t,v)$ to mean the order of $t \mod v$ when such exists. 

\begin{itemize} 
\item Suppose a rate $\frac{3}{4}$ quantum code is required
 which can correct one error, that is a $[[n,k,\geq 3]]$ quantum code with
 $\frac{3}{4} = \frac{k}{n}$ is required. Let $2r=n+k$. Now build a
 $[n,r,n-r+1]$ dual-containing code. Require $n-r+1 = 3 $ and as above
      get $n=\frac{2(d-1)}{1-R}= 8(d-1)=16$ when $d=3$. Thus  $n=16, k=12, r=14$. It is thus
 required to build a $[16,14,3]$ dual-containing code from which the
 $[[16, 12, 3]]$ quantum code may be derived by the CSS construction. 

A $16\ti 16$ Fourier matrix is required. Now $3$ is the first prime
 with $\gcd(3,16)=1$ and $\ord(3,16)= 4$. Thus a Fourier $16\ti 16 $
 matrix exists over $GF(3^4)$. Take the first $14$ rows of this to get a
 $[16,14,3]$ dual containing code from which the $[[16,12,3]]$ quantum
 code is obtained. 

Now \ord$(7,16)=2$ so a Fourier $16\ti 16$ matrix exists over
 $GF(7^2)$ and this may be better for constructing the $[[16,14,3]]$
 quantum code. But also \ord$(17,16)=1$ and so the Fourier $[16\ti
      16]$ matrix may be constructed over $GF(17)$; in this case the
      arithmetic is modular arithmetic in $\Z_{17}= GF(17)$ which is
      good. Primitive $16^{th}$ roots of unity here are $3 \mod 17$ and
      $5 \mod 17$; there are others.    

\item Suppose a rate $\frac{2}{5}$ is required which can correct $5$
      errors. Thus a code $[[n,k,d]]$ is required where $\frac{k}{n}=
      \frac{2}{5}=R$ and $d\geq 11$. Then $n=\frac{2(d-1)}{1-R}$. Now
      $1-R = \frac{3}{5}$ and it is required that $3/(2(d-1))$ for $d\geq
      11$. Thus take $d=13$ and get $n=40, k=16, r=28$. Let $F_{40}$ be
      a Fourier matrix of size $40\ti 40$. Take $r=28$ rows to form a
      $[40,28,13]$ dual-containing code from which by CSS construction a
      $[[40,16,13]]$ quantum mds code is obtained. Since $41$ is prime the Fourier $40\ti 40$ matrix may be taken over $GF(41) = \Z_{41}$ and a primitive $40^{th}$ root of unity is $7 \mod 41$. 
 
\item Suppose it is required to construct a rate $\frac{7}{8}$ quantum
      code which can correct $3$ errors, that is, construct a
      $[[n,k,d]]$ quantum code with $\frac{k}{n}= \frac{7}{8}, d\geq
      7$. Then $n=\frac{2(d-1)}{1-R})$ from which 

 $n=96, k=84, r= 90$ and $d=7$ works. Thus
construct a $[96,90,7]$ dual-containing code from which the $[[96,
84,7]]$ quantum code is constructed by the CSS construction. 

To construct a dual-containing $[96,90,7]$ code, a Fourier $96 \ti 96$
matrix is required. Now $5$ is the first prime $p$ with $\gcd(p,96)=1$. 
Then $\phi(96) = 32$ and so the field $GF(5^{32})$ will suffice. However
the order of $5 \mod 96$ is $8$ and so the field $GF(5^8)$
works. Construct the $96\ti 96$ Fourier matrix over $GF(5^8)$ and take
the code generated by the first $90$ rows of this matrix to get a
$[96,90,7]$ dual-containing code. Also \ord$(97,96)=1$ and hence the
      arithmetic could be done over $GF(97)=\Z_{97}$.   

\item It is required to construct a rate $\frac{15}{16}$ quantum code
      which can correct $3$ errors. Thus a $[[n,k,\geq 7]]$ quantum
      code with $\frac{k}{n} = \frac{15}{16}$ is required. Hence $k=
      \frac{15n}{16}$. Let $2r= n+k$ and then require an $[n,r,d]=[n,r,\geq 7]$
      dual-containing code. Then from above get $n=\frac{2(d-1)}{1-R}= 32(d-1)$. For $d=7$ this gives  $n=192, k=180,r=186$. Thus it is required to construct a Fourier
      $192 \ti 192$ matrix and take the first $186$ rows to get a $[192,
      186, 7]$ dual-containing code. Using the CSS construction a
      $[[192, 180,7]]$ quantum code is obtained. 

Over which fields can a Fourier $192\ti 192$ be constructed? Now $193$
      is prime itself and let $\F= GF(193)=\Z_{193}$ will work.  An
      element of order $192$ in $\F$ is required. Now \ord$(5,193)=192$
      and so $5 \mod 192$ is a primitive element from which the $192\ti
      192$ Fourier matrix over $GF(193)$ can be constructed. The
      arithmetic is modular arithmetic. There are fields with smaller
      characteristic which work but their order is larger. For example
      $3^{16} \equiv 1 \mod 193 $ and so there exists a primitive $192$
      root of unity in $GF(3^{16})$. 

Suppose we require a rate $\frac{15}{16}$ quantum code which now can
      correct $7$ errors. Hence require a $[[n,k,\geq 15]]$ quantum code. The
      calculations are similar and it is found that $n=448, k= 420, r=
      434$ gives a $[448, 434,15]$ dual containing code from which a
      $[[448, 420, 15]]$ quantum code may be constructed. It is required
      to find a Fourier $448\ti 448$ matrix. Now $449$ is prime and so
      such a matrix may be found over $GF(449)=\Z_{449}$. Here $3 \mod
      449  $ has order $448$ and thus may be used as the primitive
      element to form the $448\ti 448$ Fourier matrix over
      $GF(449)=\Z_{449}$.
  
\end{itemize}

\subsection{Given field, find best quantum code} Let $\F=GF(q)$ be a given finite field. What is the maximum length of a quantum code that could be constructed with coefficients from $\F$? Let $\om$ be a primitive element in $\F$; thus $\om^{q-1} = 1, \om^r \neq 1 , 1\leq r< q-1$.
  
Now construct the Fourier  $(q-1)\ti (q-1)$ over $\F$ using $\om$ as the primitive root of unity. Then dual-containing $[q-1,r, d]$ mds codes may be constructed over $\F$ provided $r> n/2$ and then quantum $[[q-1, 2r-(q-1), d]]$ codes may be constructed by the CSS construction. 
\subsection{Required over a field of prime order}
Given a rate and a required error-correcting capability,  can a quantum code with this rate and this error-correcting  be constructed over a prime field? In this case the arithmetic is then modular arithmetic which is easy to implement. 


Here is an example to show how this can be done. Suppose  a rate of $\frac{3}{5}=R$ and a distance $d\geq 11$ is required. Then from above calculations $n=\frac{2(d-1)}{1-R}$. Now $1-R=\frac{2}{5}$ and it is then required that $n=5(d-1)$. Now look at the distances of $d=11, d=12, d=13, ...$ and then require $n=50, n= 55,n=60, ...$ respectively. For $n=60$ it is seen that $61$ is prime. Consider then $GF(61)$ and form the $60\ti 60$ Fourier matrix $F_{60}$ using a primitive $60^{th}$ root of unity in $GF(61)$. Now \ord$(2,61) = 60$ so $2 \mod 61$ could be used as the primitive $60^{th}$ root of unity. 

Then form the $[60, 48,13]$ dual-containing code from $F_{60}$ from which the $[[60, 36,13]]$ quantum code is deduced. This has rate $\frac{3}{5}$ as required. The distance is $13$  and  the arithmetic is modular arithmetic in $GF(61) =\Z_{61}$.  


Suppose then in general a  rate of $R$ and a distance $\geq d$ are
required for a $[[n,2r-n,t\geq d]]$ quantum code obtained from a
dual-containing $[n,r,t\geq d]$ over a prime field. By calculations in
Section \ref{required} it is required that $n=\frac{2(t-1)}{1-R}$ for a
distance $t$. Now for $1-R=\frac{s}{q}$ it is required further that
$s/(2(t-1))$. Look at $t=d, t=d+1, \ldots $ until $s/(2(t-1))$ and
$n+1=p$ is prime. Then $n$ determines $r$ and $2r-n$.  Then form the
$n\ti n$ Fourier matrix over  $GF(p) = \Z_p$ and form the
dual-containing $[n,r,t]$ code by taking $r$ (suitable) rows of this
Fourier matrix from which the $[[n,2r-n,t]]$ quantum code is
obtained. 

Can a prime number for $n+1$ always be obtained in this manner? In the cases looked at, a prime $n+1$ value is attained very quickly. It's not proved here in
general and is a conjecture. The following is a reasonably large example
that can be written down as an illustration.  


Suppose for example a rate $\frac{4}{7}=R $ and distance $d\geq
17$, which can correct $\geq 8$ errors, are required. Then
$n=\frac{2(d-1)}{1-R} = \frac{14(d-1)}{3}$. Now it is necessary that $3/(d-1)$ and allowable values for $d\geq 17$ are
$d=19, 22, 25, \ldots$ giving $n+1=85,99, 113, \ldots$. Now $113$ is
prime so take $n=112 $ and then $r=88,k=64$. Then form a Fourier $112\ti
112$ matrix over $GF(113)=\Z_{113}$ and take suitable $88$ rows to form
the dual-containing $[112,88,25]$ code from which by the CSS
construction the $[[112,64,25]]$ mds quantum code which has rate
$\frac{4}{7}$. The arithmetic is then modular arithmetic in $GF(113)=\Z_{113}$. 
   Now \ord$(3,113)=112$ so $3\mod 113$ may be taken as the primitive $112^{nd}$ root of unity in forming the Fourier $112\ti 112$ matrix. 

\section{Hermitian}\label{her} Here codes from \cite{hurley} which are shown to be 
dual-containing relative to a Hermitian inner product are used to
construct mds quantum codes by the CSS construction. There is some
restriction on the rates achievable by the method 
 but then again using the
Hermitian inner product to obtain quantum codes may be useful. 

In a field $\F = GF(l^2)$ the Hermitian product of two vectors $u = (u_1,u_2, \ldots, u_n), v=(v_1,v_2, \ldots, v_n) \in \F^n$ is given by $<u,v>_H = u_1v_1^l + u_2v_2^l+ \ldots + u_nv_n^l$ and this is an element of $\F$.

Let $e_0,e_1, \ldots, e_{n-1}$ be the rows of a Fourier $n\ti n$ matrix. 
Now $<e_i,e_j>_H = <e_i,e_j^l>_E = <e_i,e_{jl}>$. Thus $<e_i,e_j>_H = 0$
except when $j*l=n-i$, that is when $i+j*l \equiv 0 \mod n$; in this
case $<e_i,e_j>_H =n$. 

Let   $\underline{v} = (v_1,v_2, \ldots, v_n) \in \F^n$ and define 
$\underline{v}^l = (v_1^l,v_2^l, \ldots, v_n^l)$.

The following lemma is straightforward. 
\begin{lemma} For rows $e_i$ of the Fourier matrix, $e_i^l = e_{il}$ where $il$ means $i*l \mod n$. 
\end{lemma}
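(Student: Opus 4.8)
The plan is to prove the identity by a direct coordinate-wise computation, reading off the entries of $e_i^l$ from the definition of the rows of $F_n$ and then reducing exponents of $\om$ modulo $n$. Recall from the setup that $e_i = (1,\om^i,\om^{i(2)},\ldots,\om^{i(n-1)})$, so that for each index $m$ with $0\leq m\leq n-1$ the $m$-th coordinate of $e_i$ is $\om^{im}$. The definition of $\underline v^{\,l}$ given just above the lemma says that $e_i^l$ is formed by raising each coordinate of $e_i$ to the $l$-th power, so everything reduces to comparing $\om$-powers.

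Concretely, the $m$-th coordinate of $e_i^l$ is $(\om^{im})^l = \om^{iml} = \om^{(il)m}$. Writing $il'$ for the residue $il \bmod n$, the $m$-th coordinate of $e_{il'}$ is $\om^{il'm}$. Since $\om$ is a primitive $n$-th root of unity we have $\om^n=1$, hence $\om^{(il)m} = \om^{(il\bmod n)m} = \om^{il'm}$ for every $m$. Thus $e_i^l$ and $e_{il'}$ agree coordinate by coordinate, which is exactly the claim $e_i^l = e_{il}$ with $il$ interpreted modulo $n$.

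I do not expect any real obstacle here; the only two points that need stating carefully are that the exponentiation $\underline v^{\,l}$ is taken entrywise (precisely as defined immediately before the lemma) and that powers of $\om$ may always be reduced modulo $n$ because $\om$ has order $n$. One could instead invoke the Frobenius map $x\mapsto x^l$ on $GF(l^2)$, but since $\om$ need not lie in the prime subfield this viewpoint adds nothing, so the elementary computation above is the cleanest route. With the lemma in hand, the earlier chain $<e_i,e_j>_H = <e_i,e_j^l>_E = <e_i,e_{jl}>$ is justified, and together with the Euclidean orthogonality relations it pins down that $<e_i,e_j>_H = 0$ unless $i + jl \equiv 0 \bmod n$, in which case the value is $n$.
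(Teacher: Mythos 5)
Your computation is correct, and it is exactly the ``straightforward'' argument the paper has in mind (the paper states the lemma without proof): entrywise $l$-th powers send $\om^{im}$ to $\om^{(il)m}$, and reducing $il$ modulo $n$ is harmless since $\om$ has order $n$. Nothing further is needed.
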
 
  
Let $e_i$ be such a row of a Fourier matrix over a field $GF(l^{2})$. The Hermitian inner product  is 
$<v.u>_H = <v,u^{l}>_E$. Now for a row $e_i$  of the Fourier matrix,
$e_j$ is orthogonal to every $e_j$ except just one that satisfies
$<e_i,e_j>_H \neq 0$. 

Now $<e_i,e_j>_H= <e_i,e_j^l> = (1,\om^i,\om^{2i}, \ldots, \om^{(n-1)i}). 
(1,\om^{jl},\om^{2(jl)}, \ldots, \om^{(n-1)jl})\T = 1 
+w^{i+jl}+w^{2(i+jl)}+\ldots + w^{(n-1)(i+jl)}$.

This sum is zero except when $i+jl \equiv 0 \mod n$. As $l$ has an
inverse $\mod n$, there is just one $j$ for each $i$. The dual of $e_i$ is generated by all the other $e_j$ except for
this  $e_j$ obtained from $i+jl \equiv 0 \mod n$.  
Say a row $e_i$ is self-dual relative to the Hermitian inner product if
$<e_i,e_i>_H = 0$ and otherwise say the row $e_i$ is non-self-dual
(relative to the Hermitian inner product). Thus $e_i$ is non-self-dual
(relative to Hermitian inner product) if $<e_i,e_i>_H \neq 0$. 

\subsection{Characteristic 2}
The characteristic 2 finite fields are  $GF(2^m)$. Interest now is in 
$GF(2^{2n})$ on which a Hermitian inner product  is defined
by $<v,u>_H = v_1u_1^{2^n}+ v_2u_2^{2^n} + \ldots + v_nu_n^{2^n}$. This
is $<v,u>_H = <v,u^{2^n}>_E$ where the suffix $E$ indicates the Euclidean
norm. The `$l$' here of the general case is $2^n$. 

\subsubsection{$GF(2^4)$}
Consider then $GF(2^4)$ initially. Now $2^4-1 = 15$ and there exists a
primitive $15^{th}$ root of unity in $GF(2^4)$. Thus form the Fourier
$15 \ti 15$ matrix  over $GF(2^4)$ with a primitive $15^{th}$ root of unity. Any $r$ consecutive rows of this
matrix generates an $[15,r,15-r+1]$ mds code. Interest here is in when such
codes are dual-containing relative to the Hermitian inner product. Here $l=2^2=4$.

The rows of the Fourier $15\ti 15 $ matrix using the primitive
$15^{th}$ root of unity in $GF(2^4)$ are $\{e_0,e_1, \ldots, e_{14}\}$. The non-self-dual rows are 
$\{e_0, e_3, e_6$, ..., $e_{12}\}$. 
Thus take $\mathcal{C} = <e_0,e_1,\ldots, e_{12}>$. Then
$\mathcal{C}^{\perp_H} $ consists of all the $e_i$ which are *not* in
the set $\{e_0, e_{11},e_{7},e_3,e_{14}, e_{10}, e_6, e_2, e_{13},
e_9,e_5, e_1,e_{12}\}$.  This is $\{e_4,e_8\}$ and constitutes 
$\mathcal{C}^{\perp_H}$. Thus $\mathcal{C}$ is dual-containing and is a
$[15,13,3]$ mds code by Theorem \ref{seq1}.

By CSS construction this gives a $[[15,11,\geq 3]]$ quantum mds code. For the
general case later on,  note that this is a 
$[[2^4-1, 11, 2^2-1]]$ mds quantum code. 

The question occurs as to whether it is possible to get other dual-containing codes from $F_{15}$. Take $\mathcal{D} = <e_0,e_1,e_2,
 \ldots , e_9, e_{12}>$; that is, skip from 9 to 12. It is not possible
 to leave any of $\{e_0,e_3,e_6,e_9, e_{12}\}$ out of a 
 dual-containing  code as these are non-self-dual elements. 
 
Now $\mathcal{D}^{\perp_H} = <e_1,e_4,e_5,e_8>$ and so the code is dual-containing. Hence $\mathcal{D}$ is a dual-containing $[15,11]$ code but we
cannot use Theorem \ref{seq} or Theorem \ref{seq1} to find its distance as the rows are not
consecutive  nor in arithmetic sequence with difference $k$ satisfying
$\gcd(k,15)=1$. What is the distance of $\mathcal{D}$?

Let $\mathcal{F} = < e_0, e_2, \ldots, e_6, e_8,e_9,e_{12}>$. This is
also (Hermitian) dual-containing and its dual is $\mathcal{F}^{\perp_H}
= <e_1,e_2,e_4,e_5,e_8>$. This is a $[15,10]$ dual-containing code but
we cannot immediately deduce its distance as Theorem \ref{seq} or
Theorem \ref{seq1} cannot be applied. 

However we may also produce different $[[15,11,3]]$ mds quantum codes as follows.
Choose rows $\{e_3, e_4, \ldots, e_{14}, e_0\}$ leaving out $e_2,e_3$ to form a dual-containing code. It is a $[15,13,3]]$ code by Theorem \ref{seq1} since the rows are consecutive. This then by CSS construction gives a $[[15,11,3]]$ mds quantum code. Choose rows $e_6, e_7, \ldots, e_{14}, e_0,e_1,e_2, e_3$ to give another $[15,13,3]$ dual containing code; the rows are consecutive so get the full distance. The $[[15,11,3]]$ mds quantum code is derived. In general take the rows $\{e_{3t}, e_{3t+1},\ldots, e_{0}, e_1, \ldots , e_{3t-3}\}$ to form a $[15,13,3]$ dual containing code from which to derive a $[[15,11,3]]$ mds quantum code. 

\subsubsection{$GF(2^6)$}
Consider $GF(2^6)$. Now $2^6-1= 63$ and let $F_{63}$ be the $63 \ti
63 $ Fourier matrix obtained from the primitive $63^{rd}$ root of unity in
$GF(2^6)$. Here our $l=2^3= 8$ for the Hermitian inner product. 

The rows of $F_{63}$ are $\{e_0, e_1, \ldots, e_{62}\}$. 
 Then  $\{e_0,e_7,\ldots, e_{56}\}$ are the
non-self-dual elements. Hence take $\mathcal{C}= < e_0, e_1, \ldots, e_{56}>$. This
gives an $[63,57, 7]$ dual-containing mds code by Theorem \ref{seq1}.  By CSS
construction a $[[63,51, 7]]$ quantum mds code is obtained. Note this is of the form $[[2^6-1, 51, 2^3-1]]$. 

Further $[63, 57,7]$ dual-containing codes are obtained by $\mathcal{D} = 
< e_6, e_7, \ldots, e_{62}, e_0 >$ or more generally by starting at one
of the non-self-dual $e_i$ and going by sequence to $e_{i-6}$.


\subsubsection{$GF(2^8)$} 
 
Consider $GF(2^8)$. Now  $2^8-1=255$ and there exists a primitive $255^{th}$ root
of unity in $GF(2^8)$. Form the Fourier $255\ti 255$ matrix $F_{255}$ with this
primitive root and denote the rows of $F_{255}$ by $\{e_0, e_1, \ldots,
e_{254}\}$. The $l$ here is $l=2^4=16$.

The non-self-dual $e_i$ are 
$\{e_0,e_{15}, \ldots, e_{240}\}$. 
Then take $\mathcal{C} = <e_0,e_1, \ldots, e_{240}>$ to get a dual
containing $[255, 241, 15]$ mds code by Theorem \ref{seq1}. Use  CSS construction to get an
$[[255, 227,15]]$ quantum code. This has the form $[[2^8-1,227,2^4-1]]$. 


\subsection{General characteristic 2}
General method: Consider $GF(2^{2n})$ and $q = 2^{2n} -1, l=2^n$. Form
the Fourier $q\ti q$ matrix over $GF(2^{2n})$ using a primitive $q^{th}$ root of unity. Now  $\{e_0, e_{2^n-1}, \ldots,  e_{2^n(2^{n}-1)}\}$ are the non-self-dual elements of the Fourier matrix.  

Then let $\mathcal{C}= <e_0, e_1, \ldots, e_{2^n(2^n-1)}>$ which is then a dual containing $[2^{2n}-1,
2^{2n}-2^n+1, 2^n-1]$ mds code by Theorem \ref{seq1}. Use the CSS construction to form the $[[2^{2n}-1,2^{2n}-2^{n+1}+3, 2^n-1]]$ quantum mds code. 

The rate of such a code is $R_n= \frac{2^{2n} -2^{n+1} + 3}{2^{2n}-1}$.
Now $\lim_{n\rightarrow \infty}R_n=1$.  

\subsection{Characteristic 3} The finite fields  of characteristic 3 are $GF(3^t)$. Interest here is in $GF(3^{2n})$. The `$l$' here is $3^n$. 

Consider first $GF(3^2)$. Here $3^2-1=8$ and so $GF(3^2)$ contains a primitive $8^{th} $ root of unity. Form the Fourier $8\ti 8$ matrix with a primitive $8^{th}$ root of unity. Denote its row by $\{e_0,e_1, \ldots, e_7\}$. Here $l=3$. 
 The non-self-dual $e_i$ are $\{e_0, e_2, e_4, e_6\}$.
Consider then $\mathcal{C} = <e_0, e_1, \ldots, e_6>$ which is a dual-containing $[8,7,2]$ mds linear code. By the CSS construction this gives the $[[8, 6, 2]]$ quantum code which is mds. This is not very exciting.

Now look at $GF(3^4)$. Then $3^4-1 = 80$ and thus $GF(3^4)$ contains a primitive $80^{th}$ root of unity. Form the Fourier $80\ti 80$ matrix with one of these primitive roots. The rows are $\{e_0, e_1, \ldots, e_{79}\}$. The non-self-dual $e_i$ are 
$\{e_0,e_8, e_{16}, \ldots, e_{72}\}$.
Form $\mathcal{C} = <e_0, e_1, \ldots, e_{72}> $ to get a $[80,73, 8]$ mds dual containing code. This then gives by the CSS (Hermitian) construction the $[[80, 66, 8]]$ quantum mds code. 

$GF(3^6)$ has $3^6-1= 728$ thus giving a $728$ primitive root of unity in $GF(3^6)$. Form the Fourier $728 \ti 728$ with this root of unity and denote its rows by $\{e_0, e_1, \ldots, e_{727}\}$. The non-self-dual $e_i$ 
are $\{e_0,e_{26}, e_{52}, \ldots, e_{702}\}$. Form $\mathcal{C} = <e_0,e_1, \ldots, e_{702}> $ to get the dual-containing $[728, 703,26]$ mds code. Using the CSS construction gives a $[[728, 678, 26]]$ mds quantum code. 


In general by considering $GF(3^{2n})$ an $[[3^{2n}-1,
3^{2n}-2.3^n+3, 3^n-1]]$ quantum mds code is obtained. 
The rate is $R_n= \frac{3^{2n}-2.3^n+3}{3^{2n}-1}$ and
$\lim_{n\rightarrow \infty}R_n=1$. 
   
\subsection{Characteristic 5} 
A detailed working out for $GF(5^4)$ is given before the general statement. 
Now $5^4-1 = 624$ and so define the Fourier $624 \ti 624$ matrix $F_{624}$ over $GF(5^4)$ using a primitive $624$ root of unity. Denote the rows of $F_{624}$ by $e_0,e_1, \ldots, e_{623}$. Here $l=5^2$ for the Hermitian inner product. 
The non-self-dual $e_i$ are 
$e_0,e_{24}, e_{48}, \ldots, e_{600}$.
Consider $\mathcal{C} = <e_0,e_1, \ldots, e_{600}>$  which is then a $[624, 601, 24]$ dual-containing mds code by Theorem \ref{seq1}. The CSS construction then gives a $[[624, 578, 24]]$ quantum mds code. 

In general for characteristic $5$, $[[5^{2n}-1,5^{2n}-2.5^n+3,
5^{n}-1]]$ quantum mds codes may be  obtained from Hermitian CSS construction. The rate is
$R_n=\frac{5^{2n}-2.5^n+3}{5^{2n}-1}$ and $\lim_{n\rightarrow
\infty}R_n=1$. 
\subsection{General characteristic} Suppose the characteristic is $p \neq 0$. Then quantum mds codes of the form 
$[[p^{2n}-1, p^{2n}-2.p^n+3, p^{n}-1]]$ may be formed 
using the Hermitian version of the CSS construction. 
The details are omitted.





\end{document}